\begin{document}

\title{Estimating Sequence Similarity from Read Sets for Clustering Next-Generation Sequencing data}

\titlerunning{Estimating Sequence Similarity from Read Sets}        

\author{Petr Ry\v{s}av\'{y}         \and
        Filip \v{Z}elezn\'{y} 
}


\institute{Petr Ry\v{s}av\'{y} \at
              Dept. of Computer Science, Faculty of Electrical Engineering \\ Czech Technical University in Prague, Czech Republic\\
              \email{petr.rysavy@fel.cvut.cz}           
           \and
           Filip \v{Z}elezn\'{y} \at
              Dept. of Computer Science, Faculty of Electrical Engineering \\ Czech Technical University in Prague, Czech Republic\\
              \email{zelezny@fel.cvut.cz}
}

\date{Received: date / Accepted: date}

\maketitle

\begin{abstract}
To cluster sequences given only their read-set representations, one may try to reconstruct each one from the corresponding read set, and then employ conventional (dis)similarity measures such as the edit distance on the assembled sequences. This approach is however problematic and we propose instead to estimate the similarities directly from the read sets.\\
Our approach is based on an adaptation of the Monge-Elkan similarity known from the field of databases. It avoids the NP-hard problem of sequence assembly. For low coverage data it results in a better approximation of the true sequence similarities and consequently in better clustering, in comparison to the first-assemble-then-cluster approach.
\keywords{Read sets, similarity, hierarchical clustering}
\end{abstract}

\section{Introduction}

{\em Sequencing} means reading the sequence of elements that constitute a polymer, such as the DNA.
The {\em human genome project} \citep{lander.nature.11} completed in 2003 was a prime example of sequencing, resulting in the identification of almost the entire genomic sequence (over 3 billion symbols) of a single human. 
Sequencing becomes technologically difficult as the length of the read sequence grows. The common principle of {\em new-generation sequencing} (NGS) is that only very short substrings (10's to 100's of symbols) are read at random positions of the sequence of interest. It is usually required that the number of such read substrings, called {\em reads}, is such that with high probability each position in the sequence is contained in multiple reads; the number of such reads is termed {\em coverage}. The complete sequence is determined by combinatorial assembly of the substrings guided by their suffix-prefix overlaps. For example, one possible assembly of reads $\{\str{AGGC}, \str{TGGA}, \str{GCT}\}$ is $\str{AGGCTGGA}$. Short reads imply low cost of wet-lab sequencing traded off with high computational cost of assembly. Indeed the assembly task can be posed as searching the Hamiltonian path in a graph of mutual overlaps.

One of the central tasks in computational biology is to infer phylogenetic trees, which typically amounts to hierarchical clustering of genomes. When they are represented only through sequences read-sets, the bioinformatician is forced to reconstruct the sequences from the read-sets prior to clustering. This of course entails the solution of the NP-hard assembly problem for each data instance with little guarantees regarding the quality of the resulting putative sequence. This motivates the question whether the assembly step could be entirely avoided. We address this question here by proposing a similarity function computable directly on the read sets, that should approximate the true similarities on the original sequences. 

In the next section we provide a brief overview of the related work and relevant methods. In Sect. \ref{sec:method}, we design the similarity (or, reversely distance) function. Then we provide a brief theoretical analysis of it. In Sect. \ref{sec:experiments} we compare it to the conventional approach on genomic data and then we conclude the paper.

\section{Related Work}

Related work includes studies on clustering NGS data (e.g. \cite{seed,pipe,kchouk2016clustering}). They however deal with clustering {\em reads} and we are not aware of a previous attempt to cluster {\em read-sets}. The paper  \citep{distance} proposes a similarity measure for NGS data, but again it operates on the level of reads. The previous work \citep{zelezny.14} and \citep{jalovec.14} also aims at avoiding the assembly step in learning from NGS data but these studies concern supervised classification learning and they do not elaborate on read-set similarity.

This work extends \citep{ida2016}, which corresponds to Sects. \ref{ssec:mongeelkan}, \ref{ssec:scale}, \ref{ssec:weakborder} and \ref{ssec:threshold}. However this work fixes several issues that made the original approach hard to apply on real data. In this paper we also provide a more detailed experimental evaluation.

\emph{Approximate string matching} area is related to the method we provide. Here the goal is to find strings similar to a pattern in a dictionary. Namely we employ the $q$-gram distance \citep{ukkonenQgram}. Our method also includes the Monge-Elkan distance \citep{mongeelkan} known from the field of databases.

\section{Proposed method}
\label{sec:method}

The functor $|.|$ will denote the absolute value, cardinality and length (respectively) for a number, set, and string argument. Let $\dist(A,B)$ denote the Levenshtein distance \citep{levenstein} between strings $A$ and $B$. The function measures the minimum number of edits (insertions, deletions, and substitutions) needed to make the strings identical, and is a typical example of a sequence dissimilarity measure used in bioinformatics. It is a property of the distance that
\begin{equation}\label{eq:levenupper}
  \dist(A,B) \leq \max\{|A|,|B|\} .
\end{equation}

We will work with constants $l \in \mathbb{N}, \alpha \in \mathbb{R}^{+}$ called the {\em read length} and {\em coverage}, respectively, which are specific to a particular sequencing experiment. A {\em read-set} $R_A$ of string $A$ such that \begin{equation}\label{eq:Alarge}                                                                                                                                                                                                                      |A| \gg l                                                                                                                                                                                                                       \end{equation}
is a multiset of\footnote{Should the right hand side be non-integer, we neglect its fractional part.}
\begin{equation}\label{eq:coverage}
|R_A| = \frac{\alpha}{l}|A|  
\end{equation}
 substrings sampled i.i.d. with replacement from the uniform distribution on all the $|A|-l+1$ substrings of length $l$ of $A$.  Informally, the coverage $\alpha$ indicates the average number of reads covering a given place in $A$.
 
Throughout the paper we hold assumption that the coverage $\alpha$ and read length $l$ are constant. The constant read length assumption can be justified by the way how some of the sequencing machines work. They read one nucleotide of each read every iteration.

Our goal is to propose a distance function $\setdist(R_A,R_B)$ that approximates $\dist(A,B)$ for read-sets $R_A$ and $R_B$ of arbitrary  strings $A$ and $B$.  We also want $\setdist(R_A,R_B)$ to be more accurate and less complex to calculate than a natural estimate $\dist(\hat{A},\hat{B})$ in which the arguments represent putative sequences reconstructed from $R_A$ and $R_B$ using {\em assembly algorithms} such as \citep{abyss,edena,ssake,spades,velvet}.

Alphabet that we use in our model is $\Sigma = \{ \str{A}, \str{T}, \str{C}, \str{G} \}$. If $a_i$ is a read, then $\overline{a}_i$ denotes its complement, where $\str{A}$ (resp. $\str{C}$) and $\str{T}$ (resp. $\str{G}$) are interchanged. For example complement of read $\str{ATTCG}$ is read $\str{TAAGC}$. Similarly we define reversed read, denoted $\reverse{a_i}$. For example reverse of $\str{ATTCG}$ is $\str{GCTTA}$.

\subsection{Base Case: Which Reads Belong Together}
\label{ssec:mongeelkan}

A natural approach to instantiate $\setdist(R_A,R_B)$ is to exploit the $|R_A||R_B|$ pairwise Levenshtein distances between the reads in $R_A$ and $R_B$. Most of those values are useless because they match reads from completely different parts of sequences $A$ and $B$. Therefore we want to account only for those pairs which likely belong together.

If we seek a read from $R_B$ that matches a read $a_i \in R_A$, we make the assumption that the most similar read $b_j \in R_B$ is the one that we look for (see Fig. 
\ref{fig:mongeelkan}), i.e.,
\begin{equation*}
  b_j = \argmin_{b_{k} \in R_B} \dist (a_i, b_{k}).
\end{equation*}
To calculate the distance from $R_A$ to $R_B$, we average over all reads from $R_A$:
\begin{equation}  
  \setdist_{\textsf{ME}} (R_A, R_B) 
   = \frac{1}{|R_A|} \sum_{a_i \in R_A} \min_{b_j \in R_B} \dist (a_i, b_j).
   \label{eq:mongeelkan}
\end{equation}
This idea was presented in \citep{mongeelkan} for searching duplicates in database systems. The method is known as the {\em Monge-Elkan similarity}\footnote{Here we alter the Monge-Elkan similarity into a distance measure. The standard way of using Monge-Elkan is as a similarity measure with $\min$ replaced by $\max$ and distance calculation by similarity calculation.} (hence the {\scriptsize \textsf{ME}} label) and entails a simple but effective approximation algorithm.

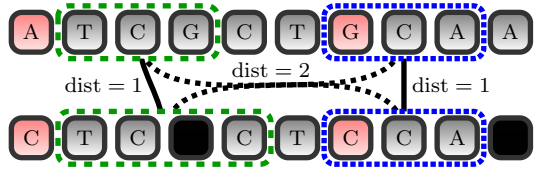
\begin{figure}[t]
  \centering
  \begin{tikzpicture}[
    y=2cm,
    letter/.style = {draw=black!80, fill=black!10, shade, line width=2pt, inner sep=1pt,minimum size=15pt, rounded corners},
    miss/.style = {bottom color=red!10, top color=red!50},
    gap/.style = {bottom color=black, top color=black},
    read/.style = {line width=2pt, rounded corners, fill=white},
    scale=0.7,
]

\draw[line width=2pt, dotted] (2,1) to[in = 90, out=-90] (7,0);
\draw[line width=2pt, dotted] (7,1) to[in = 90, out=-90] (2.5,0);

\node (X) at (4.5,0.5) [above] {$\mathrm{dist} = 2$};

\draw[line width=2pt] (2,1) to[in=90, out=-90] (2.5,0);
\draw[line width=2pt] (7,1) to[in=90, out=-90] (7,0);
\node (Y) at (2.25,0.5) [left] {$\mathrm{dist} = 1$};
\node (Z) at (7,0.5) [right] {$\mathrm{dist} = 1$};

\draw[read, draw=green!60!black, dashed] (0.5,1.25) rectangle +(3,-0.5);
\draw[read, draw=green!60!black, dashed] (0.5,0.25) rectangle +(4,-0.5);

\draw[read, draw=blue, densely dotted] (5.5,1.25) rectangle +(3,-0.5);
\draw[read, draw=blue, densely dotted] (5.5,0.25) rectangle +(3,-0.5);

\node[letter, miss] (A0) at(0,1) {A};
\node[letter] (A1) at(1,1) {T};
\node[letter] (A2) at(2,1) {C};
\node[letter] (A3) at(3,1) {G};
\node[letter] (A4) at(4,1) {C};
\node[letter] (A5) at(5,1) {T};
\node[letter, miss] (A6) at(6,1) {G};
\node[letter] (A7) at(7,1) {C};
\node[letter] (A8) at(8,1) {A};
\node[letter] (A9) at(9,1) {A};

\node[letter, miss] (B0) at(0,0) {C};
\node[letter] (B1) at(1,0) {T};
\node[letter] (B2) at(2,0) {C};
\node[letter, gap] (B3) at(3,0) {};
\node[letter] (B4) at(4,0) {C};
\node[letter] (B5) at(5,0) {T};
\node[letter, miss] (B6) at(6,0) {C};
\node[letter] (B7) at(7,0) {C};
\node[letter] (B8) at(8,0) {A};
\node[letter, gap] (B9) at(9,0) {};
\end{tikzpicture}
  \caption{We calculate read-read distances in order to find matching pairs of reads. For each read from the first sequence we find the least distant read in the second sequence. We see optimal alignment of \str{ATCGCTGCAA} and \str{CTCCTCCA}. Read \str{TCG} is paired with \str{TCC}.}
  \label{fig:mongeelkan}
\end{figure}

$\setdist_{\textsf{ME}}(R_A,R_B)$ is non-symmetric in general, which is undesirable given that the approximated distance $\dist (A,B)$ is known to be symmetric. Therefore we define a symmetric version by averaging both directions
\begin{equation}
 \setdist_{\textsf{MES}} (R_A, R_B)
   = \frac{1}{2} \left( \setdist_{\textsf{ME}} (R_A, R_B) + \setdist_{\textsf{ME}} (R_B, R_A) \right).
   \label{eq:mongeelkansym}
\end{equation}

\subsection{Strand and orientation}

In practical setting we do not know which DNA strand do the reads come from.
If we match read $a_i$ with read $b_j$, there are two possible matchings.
If reads come from the complementary strands, we need to calculate
$\dist(a_i, \reverse{\overline{b}_j})$ besides $\dist(a_i, b_j)$.
We consider only the option that leads to lower distance.

Some sequencing techniques lead to loss of information which end of read belongs to
3'-end and which end belongs to 5'-end. In that case we have two other independent
options. Reads $a_i$ and $b_j$ either match or their reverses match. We calculate
both distances and the orientation that leads to lower distance is used as the most
likely one.

Based on the sequencing setting we have up to four options how to match reads
$a_i$ and $b_j$. The options that need to be considered are subset of
$\dist(a_i, b_j)$, $\dist(a_i, \overline{b}_j)$, $\dist(a_i, \reverse{b_j})$
and $\dist(a_i, \reverse{\overline{b}_j})$.

\subsection{Distance Scale}
\label{ssec:scale}
Consider duplicating a non-empty string $A$ into $AA$ and assume $R_{AA} = R_A 
\cup R_A$. Typically for a $B$ similar to $A$ we expect that $\dist(AA,B) > \dist(A,B)$ but the (symmetric) Monge-Elkan distance will not change, i.e. $\setdist_{\textsf{MES}} (R_{AA}, R_B) =  \setdist_{\textsf{MES}} (R_{A}, R_B)$, indicating a discrepancy that should be rectified.

In fact, $\setdist_{\textsf{MES}}$ has the constant upper bound $l$, which is because it is the average (c.f. \eqref{eq:mongeelkan} and \eqref{eq:mongeelkansym}) of numbers no greater than $l$ (see \eqref{eq:levenupper}). On the other hand, $\dist(A,B)$ has a non-constant upper bound $\max\{|A|,|B|\}$ as by \eqref{eq:levenupper}.

To bring $\setdist_{\textsf{MES}}(A,B)$ on the same scale as $\dist(A,B)$, we should therefore multiply it by the factor $\max\{|A|,|B|\}/l = \max\{|A|/l,|B|/l\}$. By \eqref{eq:coverage} we have $|A| = \frac{l}{\alpha}|R_A|$, yielding the factor $\max\{|R_A|/\alpha,|R_B|/\alpha\}$, in which $\alpha$ is a constant divisor which can be neglected in a distance function. Therefore, we modify the read distance into
\begin{equation*}
\setdist_{\textsf{MESS}}(R_A,R_B) =  \max\{|R_A|,|R_B|\}  \setdist_{\textsf{MES}}(R_A,R_B).
\end{equation*}

\subsection{Margin Gaps}
\label{ssec:weakborder}
Consider the situation in Fig. \ref{fig:weakborder} showing two identical sequences each with one shown read. The Levenshtein distance between the two reads is non-zero due to the one-symbol trailing (leading, respectively) gap of the top (bottom) read caused only by the different random positions of the reads rather than due to a mismatch between the sequences. Thus there is an intuitive reason to pardon margin gaps up to certain size $t$ \begin{equation}\label{eq:Tsmall} t < \frac{l}{2}\end{equation}
when matching reads. Here, $t$ should not be too large as otherwise the distance could be nullified for pairs of long reads with small prefix-suffix overlaps, which would not make sense.

\begin{figure}[t]
  \centering
  \usetikzlibrary{patterns,snakes}

\begin{tikzpicture}[
    y=2cm,
    letter/.style = {draw=black!80, fill=black!10, shade, line width=2pt, inner sep=1pt,minimum size=15pt, rounded corners},
    miss/.style = {bottom color=red!10, top color=red!50},
    gap/.style = {bottom color=black, top color=black},
    read/.style = {line width=2pt, rounded corners, fill=white},
    scale=0.7,
]

\draw[line width=2pt] (2.5,1) to[in=90, out=-90] (3.5,0);
\node (Y) at (3,0.5) [right] {$\mathrm{dist} = 0$};

\draw[read, draw=green!60!black, dashed] (0.5,1.25) rectangle +(4,-0.5);
\draw[read, draw=green!60!black, dashed] (1.5,0.25) rectangle +(4,-0.5);

\draw[draw=green!60!black, dotted, line width=1pt] (1.5, 1.25) -- +(0,-0.5);
\draw[draw=green!60!black, dotted, line width=1pt] (3.5, 1.25) -- +(0,-0.5);
\draw[draw=green!60!black, dotted, line width=1pt] (2.5, 0.25) -- +(0,-0.5);
\draw[draw=green!60!black, dotted, line width=1pt] (4.5, 0.25) -- +(0,-0.5);

\draw [
    thick,
    decoration={brace, mirror, raise=3pt},
    decorate
] (0.5,0.75) -- +(1,0)
node[pos=0.5, below=3pt]{$t=1$};

\node[letter] (A0) at(0,1) {A};
\node[letter] (A1) at(1,1) {T};
\node[letter] (A2) at(2,1) {C};
\node[letter] (A3) at(3,1) {G};
\node[letter] (A4) at(4,1) {C};
\node[letter] (A5) at(5,1) {T};
\node[letter] (A6) at(6,1) {G};

\node[letter] (B0) at(0,0) {A};
\node[letter] (B1) at(1,0) {T};
\node[letter] (B2) at(2,0) {C};
\node[letter] (B3) at(3,0) {G};
\node[letter] (B4) at(4,0) {C};
\node[letter] (B5) at(5,0) {T};
\node[letter] (B6) at(6,0) {G};

\end{tikzpicture}
  \caption{Because reads locations in sequences are random, we do not want to penalize small leading or trailing gaps.}
  \label{fig:weakborder}
\end{figure}
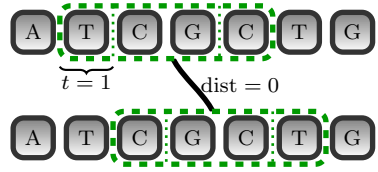

To estimate a good value for $t$, consider sequence $A$ and its sampled read-set $R_A$. We now sample an additional read $a$ of length $l$ from $A$. Ideally, there should be a zero-penalty match for $a$ in $R_A$ as $a$ was sampled from the same sequence as $R_A$ was. This happens iff there is a read in $R_A$ sampled from the same position in $A$ as $a$, or from a position shifted by up to $t$ symbols to the left or right as then the induced gaps are penalty-free. Since $R_A$ is an uniform-probability i.i.d. sample from $A$, the probability that a particular read from $R_A$ starts at one of these $1+2t$ positions is\footnote{Strictly speaking, this reasoning is incorrect if read $a$ is drawn from a place close to $A$'s margins, more precisely, if it starts in fewer than $t$ ($t+l$, respectively) symbols from $A$'s  left (right) margin, as then not all of the $2t$ shifts are possible. This is however negligible due to Ineq. \eqref{eq:Alarge}.}  $\frac{1+2t}{|A|}$. We want to put an upper bound $\varepsilon>0$ on the probability that this happens for none of the $|R_A|=\frac{\alpha}{l}
|A|$ reads in $R_A$:
\begin{equation*}
  p = \left(
    1- \frac{1+2t}{|A|}
  \right)^{\frac{|A|\cdot \alpha}{l}} \leq \varepsilon.
  \label{eq:probability}
\end{equation*}
Consider the first-order Taylor approximation 
$
(1+x)^n = 1+nx + \varepsilon'
$ where the difference term
 $\varepsilon'>0$ decreases with decreasing $|x|$. 
Due to Ineq. \eqref{eq:Tsmall} and \eqref{eq:Alarge}, $\frac{1+2t}{|A|}$ is small and we can apply the approximation on the above formula for $p$, yielding
\begin{equation*}
 p =  1 - \frac{2t+1}{|A|}\frac{|A|\cdot \alpha}{l} + 
  \varepsilon'= 
  1 - (2t + 1) \frac{\alpha}{l}  + 
  \varepsilon'\leq 
  \varepsilon.
\end{equation*}
For simplicity, we choose $\varepsilon = \varepsilon'$. The smallest gap size $t$ for which the inequality is satisfied is obtained by solving $1 - (2t + 1) \frac{\alpha}{l} = 0$, yielding
\begin{equation}
  t = \frac{1}{2} \left( \frac{l}{\alpha} - 1 \right).
  \label{eq:tvalue}
\end{equation}
This choice of $t$ matches intuition in that with larger read-length $l$ we can allow a larger grace gap $t$ but with larger coverage $\alpha$, $t$ needs not be so large as there is a higher chance of having a suitably positioned read in the read-set. Another way to look at it is to realize that 
reads in a read-set are approximately $\frac{l}{\alpha}$ positions from each other. Consider matching read $a$ to reads from $R_A$. If there is a read $a_1 \in R_A$ requiring gap larger than $\frac{l}{2\alpha}$ to match $a$, then there will typically be another read $a_2 \in R_A$ requiring gap at most $\frac{l}{2\alpha}$ (see Fig. 
\ref{fig:coverage}). From \eqref{eq:Tsmall} and \eqref{eq:tvalue} we see that this method is applicable only when $\alpha > (\frac{1}{l}+1)^{-1}$, which is bit less than $1$. However the results start to be nonzero for $\alpha > (\frac{1}{l}+2)^{-1}$, which is bit less than $0.5$.

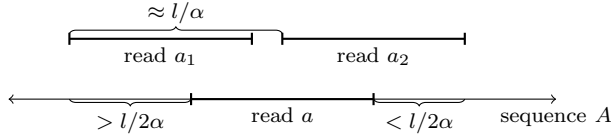
\begin{figure}
\begin{center}
\begin{tikzpicture}[scale=0.8]
\draw[<->] (0,0) -- (9,0) node[below] {sequence $A$};
\draw[thick] (3,0) -- (6,0) node[below, midway] {read $a$}; \draw[thick] (3,-0.1) -- (3,0.1);  \draw[thick] (6,-0.1) -- (6,0.1);
\draw[thick] (1,1) -- (4,1) node[below, midway] {read $a_1$};\draw[thick] (1,0.9) -- (1,1.1);  \draw[thick] (4,0.9) -- (4,1.1);
\draw[thick] (4.5,1) -- (7.5,1) node[below, midway] {read $a_2$};\draw[thick] (4.5,0.9) -- (4.5,1.1);  \draw[thick] (7.5,0.9) -- (7.5,1.1);
\draw[decoration={brace,mirror},decorate] (1,0) -- (3,0) node[below=2pt, midway] {$>l/2\alpha$};
\draw[decoration={brace,mirror},decorate] (6,0) -- (7.5,0) node[below=2pt, midway] {$<l/2\alpha$};
\draw[decoration={brace,raise=2pt},decorate] (1,1) -- (4.5,1) node[above=3pt,midway] {$\approx l/\alpha$};
\end{tikzpicture}
\end{center}
\caption{Illustration to reasoning in Sect. \ref{ssec:weakborder}}\label{fig:coverage}
\end{figure}

We implemented the grace margin gaps into a further version $\setdist_{\textsf{MESSG}}$ of the constructed distance function, which required only a small change to the standard Wagner-Fischer algorithm \citep{wagnerfischer}.\footnote{The dynamic programming algorithm for calculating the Levenshtein distance \citep{levenstein} is commonly called Wagner-Fischer algorithm \citep{wagnerfischer}. When we refer to sequence alignment problem in bioinformatics, this algorithm is often called Needleman-Wunsch algorithm \citep{needlemanwunsch}.} When the algorithm is filling the first or the last row and column of the table, margin  gaps up to $t$ symbols are not penalized. Larger margin gaps are penalized in a way that satisfies the constraint  that the distance between a word $a$ and an empty word is $|a|$. In particular, the standard linear gap penalty is replaced with a piecewise linear function that gives cost of margin gap at $x$-th position
 
\begin{equation}
  g(x) = \begin{cases}
    0, & \mbox{if } 0 \leq x \leq t - 1, \\
    2\frac{x-t + 1}{l+1-2t}, & \mbox{if } t-1 < x \leq l-t, \\
    2, & \mbox{if } l-t < x < l.
  \end{cases}
  \label{eq:weakborder}
\end{equation}

\subsection{Missing Read}
\label{ssec:threshold}

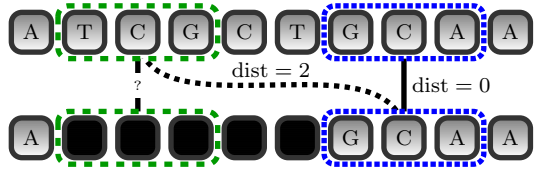
\begin{figure}
  \centering
  \begin{tikzpicture}[
    y=2cm,
    letter/.style = {draw=black!80, fill=black!10, shade, line width=2pt, inner sep=1pt,minimum size=15pt, rounded corners},
    miss/.style = {bottom color=red!10, top color=red!50},
    gap/.style = {bottom color=black, top color=black},
    read/.style = {line width=2pt, rounded corners, fill=white},
    scale=0.7,
]

\draw[line width=2pt, dotted] (2,1) to[in = 90, out=-90] (7,0);

\node (X) at (4.5,0.5) [above] {$\mathrm{dist} = 2$};

\draw[line width=2pt] (2,1) -- (2,0) node[pos=0.5, fill=white] {\tiny $?$};
\draw[line width=2pt] (7,1) to[in=90, out=-90] (7,0);
\node (Z) at (7,0.5) [right] {$\mathrm{dist} = 0$};

\draw[read, draw=green!60!black, dashed] (0.5,1.25) rectangle +(3,-0.5);
\draw[read, draw=green!60!black, dashed] (0.5,0.25) rectangle +(3,-0.5);

\draw[read, draw=blue, densely dotted] (5.5,1.25) rectangle +(3,-0.5);
\draw[read, draw=blue, densely dotted] (5.5,0.25) rectangle +(3,-0.5);

\node[letter] (A0) at(0,1) {A};
\node[letter] (A1) at(1,1) {T};
\node[letter] (A2) at(2,1) {C};
\node[letter] (A3) at(3,1) {G};
\node[letter] (A4) at(4,1) {C};
\node[letter] (A5) at(5,1) {T};
\node[letter] (A6) at(6,1) {G};
\node[letter] (A7) at(7,1) {C};
\node[letter] (A8) at(8,1) {A};
\node[letter] (A9) at(9,1) {A};

\node[letter] (B0) at(0,0) {A};
\node[letter, gap] (B1) at(1,0) {};
\node[letter, gap] (B2) at(2,0) {};
\node[letter, gap] (B3) at(3,0) {};
\node[letter, gap] (B4) at(4,0) {};
\node[letter, gap] (B5) at(5,0) {};
\node[letter] (B6) at(6,0) {G};
\node[letter] (B7) at(7,0) {C};
\node[letter] (B8) at(8,0) {A};
\node[letter] (B9) at(9,0) {A};
\end{tikzpicture}
  \caption{If the distance between a read and its closest counterpart is greater than threshold $\theta$, we assume that the read matches to a gap in the sequence alignment.}
  \label{fig:threshold}
\end{figure}

Sometimes there is no good match for read $a_i$ in $R_B$. During evolution the substring that contained $a_i$ may have been inserted into $A$ or may have vanished from $B$. Therefore if
\begin{equation*}
  \dist(a_i, b_j) \geq \theta
\end{equation*}
for some reads $a_i$ and $b_j$ and threshold $\theta$, we consider $a_i$ and $b_j$ to be dissimilar and we force their distance to be $l$. (See Fig. \ref{fig:threshold}.)

Threshold $\theta$ should be a linear factor of the maximal distance between two sequences of length $l$, i.e. $\theta = \theta' \cdot l$. Value of $\theta'$ should reflect the probability that the read is in one sequence and not in the other. Because the true probability is hidden, it needs to be determined empirically.

The distance function equipped with the missing read detection as described gives rise to the version denoted as $\setdist_{\textsf{MESSGM}}$.

\subsection{Sampling}
\label{ssec:sampling}

Unlike the assembly, the approach in this paper does not require high coverage
to produce good quality data. Therefore for data sequenced with high coverage
we can randomly sample only a small amount of reads to improve runtime. From
the experiments we will see that coverage between $2$ and $3$ provides a good
balance between runtime and quality of results.

\subsection{Embedding}
\label{ssec:embedding}

Another approach that can further reduce the running time is based on using
embedding for finding a good candidate for the best match. The minimum
in \eqref{eq:mongeelkan} does not need to be exact, but may be approximate.
We therefore embedd reads to a space of lower dimension, where we can calculate
the distance faster. Then we find reads minimizing the embedded distance.
For this small set of reads we evaluate the true Levenshtein distance with
grace margin gap penalty $\setdist_{\textsf{MESSG}}$ to find better estimate for use in \eqref{eq:mongeelkan}.

The embedding that we use is based on $q$-gram profile and $q$-gram distance
as formalized in \citep{ukkonenQgram}. $q$-gram is any string from $\Sigma^q$.
Then \emph{$q$-gram profile} of string $a$ is a vector
$\vec{Q}(a) \in \mathbb{N}_0^{|\Sigma|^{q}}$ that contains number or occurences
for all possible $q$-grams in $a$. Finally \emph{$q$-gram distance} $\dist_q(a,b)$ of strings
$a$ and $b$ is Manhattan distance of their $q$-gram profiles, i.e.
$ \dist_q(a,b) = \| \vec{Q}(a) - \vec{Q}(b) \|_1 $.

We use $q$-gram distance to approximate the Levenshtein distance to find
candidates that minimize the right hand side of \eqref{eq:mongeelkan}. The
$q$-gram distance forms a lower bound on edit distance (see Theorem \ref{thm:lb}). 
Similarly to BLAST \citep{blast} algorithm we use $q=3$, which provides
a good balance between the runtime and quality of estimates. The dimension
of embedded space of reads is equal to $|\Sigma|^q$, which is $64$ in our case.
In the next sections we will refer to this modification of distance as
$\setdist_\str{MESSGq}$

\begin{theorem} \label{thm:lb}
  For any reads $a$ and $b$ and $q=3$ holds
  $$ \dist(a,b) \geq \frac{1}{6} \textstyle\dist_q(a,b). $$
\end{theorem}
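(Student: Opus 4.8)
The plan is to control how much a single edit operation can change the $q$-gram profile, and then sum that bound along an optimal edit script. First I would establish the per-edit estimate: if $s'$ is obtained from a string $s$ by one insertion, deletion, or substitution, then $\|\vec{Q}(s) - \vec{Q}(s')\|_1 \le 2q$. Granting this, set $k = \dist(a,b)$ and choose strings $a = c_0, c_1, \dots, c_k = b$ in which each $c_{i+1}$ arises from $c_i$ by a single edit (such a chain exists by the definition of the Levenshtein distance). The triangle inequality for $\|\cdot\|_1$ then gives
\begin{equation*}
  \dist_q(a,b) = \bigl\| \vec{Q}(c_0) - \vec{Q}(c_k) \bigr\|_1
  \le \sum_{i=0}^{k-1} \bigl\| \vec{Q}(c_i) - \vec{Q}(c_{i+1}) \bigr\|_1
  \le 2qk = 2q\,\dist(a,b),
\end{equation*}
and substituting $q = 3$ yields $\dist(a,b) \ge \frac{1}{6}\dist_q(a,b)$, as claimed.

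To get the per-edit estimate I would use the elementary fact that a fixed position of a string lies in at most $q$ of its $q$-grams: the $q$-gram occupying positions $j,\dots,j+q-1$ contains position $i$ precisely when $i-q+1 \le j \le i$, i.e. for at most $q$ values of $j$ (fewer near a margin). For a substitution at position $i$, only those at most $q$ $q$-grams through $i$ change, each being replaced by a different $q$-gram, so at most $q$ coordinates of the profile decrease and at most $q$ increase, giving $\|\cdot\|_1 \le 2q$. For an insertion of a symbol between positions $i-1$ and $i$ of $s$, every $q$-gram of $s$ lying entirely before position $i$, or entirely at or after it, survives unchanged; the $q$-grams that vanish are exactly those straddling the cut, namely those starting at positions $i-q+1,\dots,i-1$, at most $q-1$ of them, while the new $q$-grams are the at most $q$ ones covering the inserted symbol, so $\|\cdot\|_1 \le (q-1) + q \le 2q$. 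Deletion is the mirror image of insertion and obeys the same bound.

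The main thing to be careful about is the boundary bookkeeping in the insertion and deletion cases: one has to check that a $q$-gram is destroyed only if it straddles the edit site and created only if it covers the new symbol, and that these counts stay at most $q-1$ and $q$ respectively even when the relevant index window is clipped by the ends of the string — and that strings shorter than $q$, which contribute no $q$-grams at all, cause no trouble. This is routine index chasing rather than a genuine obstacle; once the per-edit estimate is in place, the constant $2q = 6$ for $q = 3$ is immediate.
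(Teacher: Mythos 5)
Your proof is correct and follows essentially the same route as the paper's: both arguments rest on bounding the $\ell_1$-change of the $q$-gram profile under a single edit (at most $2q-1$ for an indel, $2q$ for a substitution — matching the paper's $5$ and $6$ for $q=3$) and then accumulating this bound along an optimal edit script. The only cosmetic difference is that you telescope via the triangle inequality while the paper phrases the accumulation as induction on $\dist(a,b)$; your general-$q$ bookkeeping of which $q$-grams are destroyed or created is a slightly more careful version of the paper's worst-case example.
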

\begin{proof}
  We will prove the theorem by mathematical induction on Levenshtein distance between $a$ and $b$. Suppose that $\dist(a,b)=k$.
  \begin{itemize}
    \item If $k=0$, then both $\dist(a,b)=\dist_q(a,b)=0$ and theorem holds.
    \item Suppose that theorem holds for any reads with distance at most $k-1$. There exists a sequence of $k$ operations \emph{insert}, \emph{delete} and \emph{replace} that transforms $a$ into $b$. Denote $b'$ the result that we obtain after applying the first $k-1$ operations. Then $\dist(a,b')=\dist(a,b)-1$. Consider the last operation and calculate the maximal value of $\dist_q(a,b)-\dist_q(a,b')$.
    
    In case of insertion (or vice versa deletion) we obtain in worst case instead of $q$-grams $abc$,$bcd$ $q$-grams $abX$,$bXc$,$Xcd$ (or vice versa). The $q$-gram distance grows by at most $5$. In case of mismatch $q$-grams $abX$,$bXc$,$Xcd$ are replaced by $q$-grams $abY$,$bYc$,$Ycd$. The $q$-gram distance grows at most by $6$. Therefore $\dist_q(a,b)-\dist_q(a,b') \leq 6$. Putting this together with the induction assumption leads to
    \begin{align*}
     \dist(a,b) &= 1 + \dist(a,b')
                  \geq 1 + \frac{1}{6} \textstyle \dist_q(a,b')
                  \\ &
                  \geq 1 + \frac{1}{6} \left(\textstyle \dist_q(a,b)-6\right)
                  = \frac{1}{6} \textstyle \dist_q(a,b).
    \end{align*}
  \end{itemize}
  By mathematical induction the theorem holds. \qed
\end{proof}

\section{Theoretical Analysis}

\subsection{Asymptotic Complexity}
\label{ssec:asymptotic}

Calculating $\dist(A,B)$ for sequences $A$ and $B$ requires $\Theta(|A||B|)$ operations if we use the standard Wagner-Fischer dynamic programming algorithm \citep{wagnerfischer}. This algorithm also requires $\Theta(\min(|A|,|B|))$ memory as we are interested only in distance and not in the alignment. To calculate $\setdist_{\textsf{ME}}$ we need to know the distances between all pairs of reads, so we have to evaluate (see \eqref{eq:coverage}) $\frac{\alpha}{l}|A|\frac{\alpha}{l}|B|$ distances where each one requires $l^2$ operations. Therefore $\alpha^2 |A||B|$ operations are required. For the symmetric version $\setdist_{\textsf{MES}}$ we make $2\alpha^2 |A||B|$ operations, which can be reduced to $\alpha^2 |A||B|$ operations and $\Theta(l + \frac{\alpha}{l} (|A|+|B|))$ memory. Further modifications ({\scriptsize \textsf{MESS}, \textsf{MESSG}, \textsf{MESSGM}}) do not change the asymptotic complexity.

Sampling, as described in Sect. \ref{ssec:sampling} reduces runtime by $\alpha^{2}$ factor to $\Theta(|A||B|)$, assuming that the final coverage is a constant. Method {\scriptsize \textsf{MESSGq}} does not give any theoretical guarantee on how many pairwise edit distances are explored. However if we assume that this number is a small constant,  we get runtime of $\Theta(|R_A||R_B|+l^2(|R_A|+|R_B))$.

The constants $\alpha$ and $l$ are determined by the sequencing technology and the independent complexity factors are $|A|$ and $|B|$. To calculate the distance in the conventional way as $\dist(\hat{A}, \hat{B})$ requires to reconstruct $\hat{A}$ and $\hat{B}$ from the respective read-sets through an assembly algorithm. This is an NP-hard problem which becomes non-tractable for large  $|A|$ and $|B|$, and which is avoided by our approach.

\subsection{Metric Properties}
$\setdist_{\textsf{MES}}$ as well as the later versions are all symmetric and non-negative but none of the proposed versions satisfies the identity condition ($\dist(a,b)=0$ iff $a=b$) or the triangle inequality, despite being based on the Levenshtein distance $\dist$, which is a metric. For example, let $R_A=\{\str{ATC},\str{ATC},\str{GGG}\}$, let $R_B=\{\str{ATA},\str{GGG}\}$, and let $R_C=\{\str{CTA}, \str{GGG}\}$.
Then $\setdist_{\textsf{MES}}(R_A,R_B) = \frac{7}{12}$,
and $\setdist_{\textsf{MES}}(R_B,R_C) = \frac{1}{2}$
but $\setdist_{\textsf{MES}}(R_A, R_C) = \frac{14}{12} > \frac{7}{12}+ \frac{1}{2}$.
While this might lead to counter-intuitive behavior of the proposed distances in certain applications, the violated conditions are not requirements assumed by clustering algorithms.

\section{Experimental Evaluation}
\label{sec:experiments}

\begin{table*}
  \caption{Overview of the datasets used in the experiments.}
  \label{tab:datasets}
  \centering
  \begin{tabular}{lllllllll}
    \hline
    Name		& Source			& Read generation			& Strand known	& 5' to 3' known	& $n$	& $\alpha$			& $l$				& Time-limit \\
    \hline
    Influenza	& ENA				& i.i.d., uniform distr.	& \no			& \no				& $13$	& $0.1$ to $100$	& $3$ to $500$		& $2$ hours \\
    Various		& ENA				& i.i.d., uniform distr.	& \no			& \no				& $18$	& $0.1$ to $100$	& $3$ to $500$ 		& $2$ hours \\
    Hepatitis	& ENA				& \citep{art}				& \no			& \yes				& $81$	& $\{10, 30, 50\}$	& $\{30, 70, 100\}$	& $1$ day \\
    Chromosomes	& \citep{1kgenomes}	& real-world				& \no			& \yes				& $23$	& $4.32$			& $76$				& $1$ day \\
    \hline
  \end{tabular}
\end{table*}

\begin{landscape}
\begin{table*}[t]
  \centering
  \caption{Runtime, Pearson's correlation coefficient between distance matrices and Fowlkes-Mallows index for $k=4$ and $k=8$. The `reference' method calculates distances from the original sequences. We show only two assembly algorithms that gave the highest correlation and the better algorithm of pairs {\scriptsize \textsf{MES}}/{\scriptsize \textsf{MESS}} and {\scriptsize \textsf{MESSG}}/{\scriptsize \textsf{MESSGM}}.}
\label{tab:results}
  \pgfplotstabletypeset[
    create on use/finishedst/.style={create col/assign/.code={
        \edef\entry{\thisrow{finished}/\thisrow{outof}}
        \pgfkeyslet{/pgfplots/table/create col/next content}{\entry}}},
    columns={empty,visibleName,finishedst,ms/assem,ms/matrix, ms/upgma, ms/nj, correlation, FM/upgmak4, FM/upgmak8, FM/njk4, FM/njk8},
    columns/empty/.style={column name=Dataset,column type=l,string type},
    columns/visibleName/.style={string type, column name={method}},
    columns/finishedst/.style={string type, column name={finished}},
    columns/ms/assem/.style={fixed,column name={$\frac{\mbox{assem.}}{\mathrm{ms}}$}},
    columns/ms/matrix/.style={fixed,column name={$\frac{\mbox{distances}}{\mathrm{ms}}$}},
    columns/ms/upgma/.style={column name={$\frac{\mbox{UPGMA}}{\mathrm{ms}}$}},
    columns/ms/nj/.style={column name={$\frac{\mbox{NJ}}{\mathrm{ms}}$}},
    columns/correlation/.style={precision=3, skip 0., column name={corr.}},
    columns/FM/upgmak4/.style={skip 0., column name={$\frac{\mbox{UPGMA}}{B_4}$}},
    columns/FM/upgmak8/.style={skip 0., column name={$\frac{\mbox{UPGMA}}{B_8}$}},
    columns/FM/njk4/.style={skip 0., column name={$\frac{\mbox{NJ}}{B_4}$}},
    columns/FM/njk8/.style={skip 0., column name={$\frac{\mbox{NJ}}{B_8}$}},
    every head row/.style={before row=\toprule,after row=\midrule
        \multirow{7}{*}{Influenza},
    },
    every nth row={7}{before row=\midrule \multirow{7}{*}{Various},},
    every nth row={14}{before row=\midrule \multirow{7}{*}{Hepatitis},},
    every nth row={21}{before row=\midrule \multirow{7}{*}{Chromosomes},},
    every last row/.style={after row=\bottomrule},
    skip rows between index={7}{9},
    skip rows between index={10}{11},
    skip rows between index={19}{20},
    skip rows between index={21}{22},
    skip rows between index={23}{24},
    skip rows between index={31}{34},
    skip rows between index={43}{45},
    skip rows between index={46}{47},
    skip rows between index={2}{3},
    skip rows between index={14}{15},
    skip rows between index={27}{28},
    skip rows between index={39}{40},
    skip rows between index={5}{6},
    skip rows between index={16}{17},
    skip rows between index={29}{30},
    skip rows between index={41}{42},
  ]\alldata
\end{table*}
\end{landscape}

\begin{figure*}[h!]
\centering
  \begin{tabular}{cc}
  \begin{tikzpicture}
    \begin{axis}[width=0.50\textwidth,
      legend columns=4,
      legend to name=legend:subplotlegend,
      title={Influenza},
      ymin=0,
      ymax=1,
      ]
      %
      \addlegendentry{$\max(|R_A|,|R_B|)$};
      \addlegendentry{$\setdist_{\textsf{MES}}$};
      \addlegendentry{$\setdist_{\textsf{MESS}}$};
      \addlegendentry{$\setdist_{\textsf{MESSG}}$};
      \addlegendentry{$\setdist_{\textsf{MESSGM}}, \theta'=0.35$};
      \addlegendentry{$\setdist_{\textsf{MESSGq}}$};
      \addlegendentry{ABySS};
      \addlegendentry{Edena};
      \addlegendentry{SSAKE};
      \addlegendentry{SPAdes};
      \addlegendentry{Velvet};
      \foreach \method in {maxSize, MongeElkan, MongeElkanScale, MarginGaps, threshold, tripletsMG, abyss, edena, ssake, spades, velvet}
          \addplot table [x={k}, y={\method}] {\influenzaFMNJ};
    \end{axis}
  \end{tikzpicture}
  &
  \begin{tikzpicture}
    \begin{axis}[width=0.50\textwidth,title={Various},ymin=0,ymax=1,]
      \foreach \method in {maxSize, MongeElkan, MongeElkanScale, MarginGaps, threshold, tripletsMG, abyss, edena, ssake, spades, velvet}
          \addplot table [x={k}, y={\method}] {\variousFMNJ};
    \end{axis}
  \end{tikzpicture}
  \\
  \begin{tikzpicture}
    \begin{axis}[width=0.50\textwidth,title={Hepatitis},ymin=0,ymax=1,]
      \foreach \method in {maxSize, sMongeElkan, sMongeElkanScale, sMarginGaps, sthreshold, tripletsMG, abyss, edena, ssake, spades, velvet}
          \addplot table [x={k}, y={\method}] {\hepatitisFMNJ};
    \end{axis}
  \end{tikzpicture}
  &
  \begin{tikzpicture}
    \begin{axis}[width=0.50\textwidth,title={Chromosomes},ymin=0,ymax=1,]
      \foreach \method in {maxSize, MongeElkan, MongeElkanScale, MarginGaps, threshold, tripletsMG, abyss, edena, ssake, spades, velvet}
          \addplot table [x={k}, y={\method}] {\chromsFMNJ};
    \end{axis}
  \end{tikzpicture}
  \end{tabular}
\ref{legend:subplotlegend}
\caption{Plots of Fowlkes-Mallows index $B_k$ versus $k$. The index compares trees generated by the neighbor-joining algorithm. The tree is compared with the tree generated from the original sequences. If all values are equal to $1$, the structures of the trees are the same.}
\label{fig:fmindex}
\end{figure*}
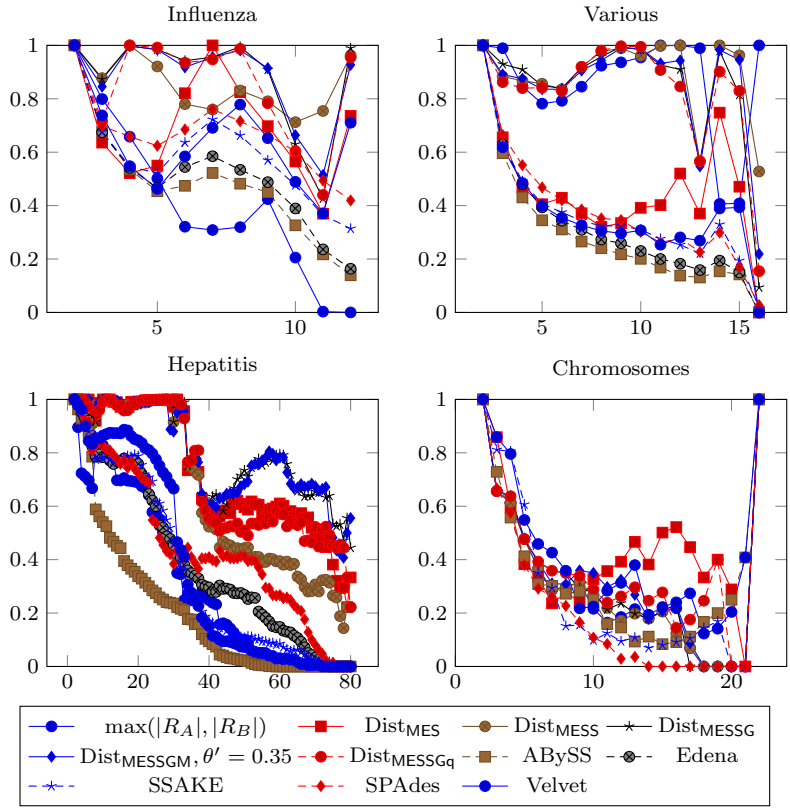

The {\bf purpose} of the experiments is to compare different methods for estimating the Levenhstein distance $\dist(A,B)$ for various real DNA sequences $A, B$ from their read sets $R_A, R_B$. The methods include (i) our newly proposed distances ({\scriptsize \textsf{MES}, \textsf{MESS}, \textsf{MESSG}, \textsf{MESSGM}, \textsf{MESSGq}}) applicable directly on $R_A, R_B$ and implemented in Java with maximum of shared code, (ii) the conventional method based on assembling estimates  $\hat{A}, \hat{B}$ of the original sequences $A,B$ using 5 common de-novo gene assemblers (ABySS \citep{abyss}, Edena \citep{edena}, SSAKE \citep{ssake}, SPAdes \citep{spades} and Velvet \citep{velvet}) and then estimating $\dist(A,B)$ as $\dist(\hat{A}, \hat{B})$, (iii) a trivial baseline method estimating $\dist(A,B)$  as $ \max\{|R_A|,|R_B|\}$. Our code was single-threaded. All the 5 assembly algorithms were configured with the default parameters and the current official C++ version was used.
When a result of an assembly procedure consisted of multiple contigs, we selected the two best matching contigs to represent the distance.

The evaluation {\bf criteria} consist of (i) the Pearson's correlation coefficient measuring the similarity of the distance matrix produced by the respective methods to the true distance matrix, (ii) The Fowlkes-Mallows index \citep{fmindex} measuring the similarity between the tree produced by a hierarchical clustering algorithm using the true distance matrix, and the tree induced from a distance matrix estimated by the respective method, (iii) runtime needed for assembly (if applicable), distance matrix calculation, and clustering time. For hierarchical clustering, we used the UPGMA algorithm \citep{upgma} and the neighbor-joining algorithm \citep{neighjoin}. The Fowlkes-Mallows index shows how much the resulting trees differ in structure. Both trees are first cut into $k$ clusters for $k = 2, 3, \ldots, n-1$. Then the clusterings are compared based on the number of common objects among each pair of clusters. By this procedure we obtain a set of values $B_k$ that shows how much the trees differ at various levels.

The testing {\bf data} contains four datasets. They are summarized in Table \ref{tab:datasets}. The ``influenza'' dataset\footnotemark
\footnotetext{\str{AF389115}, \str{AF389119}, \str{AY260942}, \str{AY260945}, \str{AY260949}, \str{AY260955}, \str{CY011131}, \str{CY011135}, \str{CY011143}, \str{HE584750}, \str{J02147}, \str{K00423} and outgroup \str{AM050555}. The genomes are available at \texttt{http://www.ebi.ac.uk/ena/data/view/<accession>}.}
contains 12 influenza virus genome sequences plus an outgroup sequence. The ``various'' dataset\footnotemark
\footnotetext{\str{AB073912}, \str{AB236320}, \str{AM050555}, \str{D13784}, \str{EU376394}, \str{FJ560719}, \str{GU076451}, \str{JN680353}, \str{JN998607}, \str{M14707}, \str{U06714}, \str{U46935}, \str{U66304}, \str{U81989}, \str{X05817}, \str{Y13051} and outgroup \str{AY884005}}
contains 17 genomes of different viruses. Furthermore, we used an independent third {\em training} dataset\footnotemark\ to tune the value of $\theta'$ (see Sect. \ref{ssec:threshold}). All the sequences were downloaded from the ENA repository \url{http://www.ebi.ac.uk/ena}.
\footnotetext{\str{CY011119}, \str{CY011127}, \str{CY011140}, \str{FJ966081}, \str{AF144300}, \str{AF144300}, \str{J02057}, \str{AJ437618}, \str{FR717138}, \str{FJ869909}, \str{L00163}, \str{KJ938716}, \str{KP202150}, \str{D00664}, \str{HM590588}, \str{KM874295}, $\alpha=4$, $l=40$}
Those two datasets contained artificial reads that were sampled under assumptions that reads are i.i.d. and that $\alpha$ and $l$ are constant. We sampled those two datasets several times with a high range of coverage and read length\footnote{$(\alpha, l) \in \{0.1, 0.3, 0.5, 0.7, 1, 1.5, 2, 2.5, 3, 4, 5, 7, 10, 15, 20, 30,\allowbreak 40,\allowbreak 50,\allowbreak 70,\allowbreak 100\} \times \{3, 5, 10, 15, 20, 25, 30, 40, 50, 70, 100, 150,\allowbreak 200,\allowbreak 500\}$} to obtain a representative information about algorithms behavior based on $\alpha$ and $l$. For averaging purposes we left out the three most outlying values of $\alpha$ and $l$.

\begin{figure}
  \centering
  \begin{tikzpicture}
    \begin{axis}[width=0.5\textwidth,
      legend columns=4,
      legend to name=legend:subplotlegendCorr,
      title={Influenza},
      ymin=0.35,
      ymax=1,
      xmin=0.1,
      xmax=100,
      xmode=log,
      log ticks with fixed point,
      ]
      %
      \addlegendentry{$\max(|R_A|,|R_B|)$};
      \addlegendentry{$\setdist_{\textsf{MES}}$};
      \addlegendentry{$\setdist_{\textsf{MESS}}$};
      \addlegendentry{$\setdist_{\textsf{MESSG}}$};
      \addlegendentry{$\setdist_{\textsf{MESSGM}}$};
      \addlegendentry{$\setdist_{\textsf{MESSGq}}$};
      \addlegendentry{ABySS};
      \addlegendentry{Edena};
      \addlegendentry{SSAKE};
      \addlegendentry{SPAdes};
      \addlegendentry{Velvet};
      \foreach \method in {maxSize, MongeElkan, MongeElkanScale, MarginGaps, threshold, tripletsMG, abyss, edena, ssake, spades, velvet}
          \addplot table [x={coverage}, y={\method}] {\influenzaCoverageSeries};
    \end{axis}
  \end{tikzpicture}
  \begin{tikzpicture}
    \begin{axis}[
      width=0.5\textwidth,
      title={Various},
      ymin=0.2,
      ymax=1,
      xmin=0.1,
      xmax=100,
      xmode=log,
      log ticks with fixed point,
      ]
      \foreach \method in {maxSize, MongeElkan, MongeElkanScale, MarginGaps, threshold, tripletsMG, abyss, edena, ssake, spades, velvet}
          \addplot table [x={coverage}, y={\method}] {\variousCoverageSeries};
    \end{axis}
  \end{tikzpicture}
\ref{legend:subplotlegendCorr}
\caption{Plot of average Pearson's correlation coefficient for several choices of coverage values.}
\label{fig:corseries}
\end{figure}
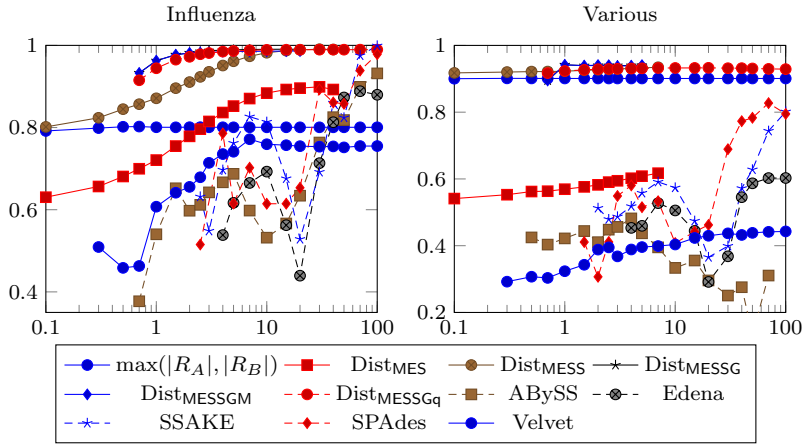

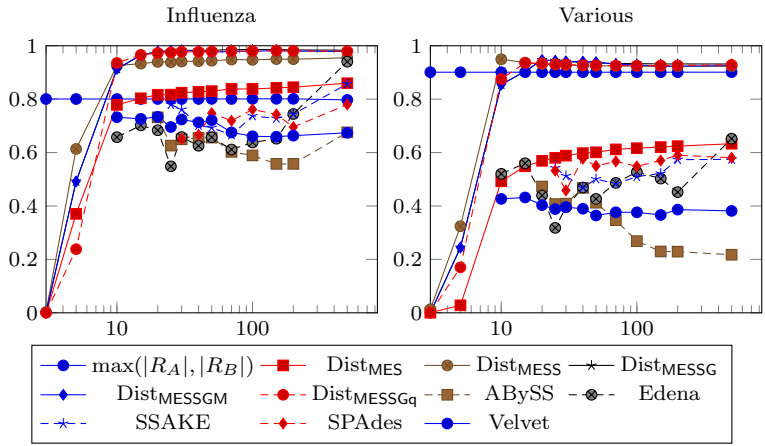
\begin{figure}
\centering
  \begin{tikzpicture}
    \begin{axis}[width=0.50\textwidth,
      legend columns=4,
      legend to name=legend:subplotlegendRL,
      title={Influenza},
      ymin=0,
      ymax=1,
      xmin=3,
      xmode=log,
      log ticks with fixed point,
      ]
      %
      \addlegendentry{$\max(|R_A|,|R_B|)$};
      \addlegendentry{$\setdist_{\textsf{MES}}$};
      \addlegendentry{$\setdist_{\textsf{MESS}}$};
      \addlegendentry{$\setdist_{\textsf{MESSG}}$};
      \addlegendentry{$\setdist_{\textsf{MESSGM}}$};
      \addlegendentry{$\setdist_{\textsf{MESSGq}}$};
      \addlegendentry{ABySS};
      \addlegendentry{Edena};
      \addlegendentry{SSAKE};
      \addlegendentry{SPAdes};
      \addlegendentry{Velvet};
      \foreach \method in {maxSize, MongeElkan, MongeElkanScale, MarginGaps, threshold, tripletsMG, abyss, edena, ssake, spades, velvet}
          \addplot table [x={readLength}, y={\method}] {\influenzaRLSeries};
    \end{axis}
  \end{tikzpicture}
  \begin{tikzpicture}
    \begin{axis}[
      width=0.50\textwidth,
      title={Various},
      ymin=0,
      ymax=1,
      xmin=3,
      xmode=log,
      log ticks with fixed point,
      ]
      \foreach \method in {maxSize, MongeElkan, MongeElkanScale, MarginGaps, threshold, tripletsMG, abyss, edena, ssake, spades, velvet}
          \addplot table [x={readLength}, y={\method}] {\variousRLSeries};
    \end{axis}
  \end{tikzpicture}
\ref{legend:subplotlegendRL}
\caption{Plot of average Pearson's correlation coefficient for several choices of read length.}
\label{fig:rlseries}
\end{figure}

The ``hepatitis'' dataset contains 81 Hepatitis A segments. This time we used ART \citep{art} program to similate sequencing to obtain read data for $(\alpha, l) \in \{10, 30, 50\} \times \{30, 70, 100\}$. We sampled with higher coverage and in case of our methods we down-sampled to coverage of $2$.

The ``chromosomes'' dataset contains $23$ regions of human genome. The data were obtained from \emph{1000 Genomes Project} \citep{1kgenomes}. For each chromosome we selected one $20 \, \mathrm{kbp}$ long region and obtained reads that were sequenced from this region. We used the ensembl \citep{ensembl} reference human genome to calculate the reference distance matrix. In this dataset the average coverage per chromosome was $4.32$. However for particular chromosomes this value varied a lot.

In the preliminary {\bf tuning} experiment, the value $\theta' = 0.35$ achieved the best Pearson's correlation coefficient on the training dataset and we carried this value over to the testing experiments.

A time limit was applied on assembly step and distance matrix calculation. Whenever an algorithm did not finish in time or assembly step failed to produce any contig, we marked the attempt as unsuccessful and did not count it towards the average.

The main experimental {\bf results} are shown in Table \ref{tab:results}. The four partitions of the table correspond to the average results on the four datasets. The Pearson's correlation coefficient (column `corr.') demonstrates dominance of the {\scriptsize \textsf{MESSG}} and {\scriptsize \textsf{MESSGM}} methods (Sects. \ref{ssec:weakborder} and \ref{ssec:threshold}), which are the most developed versions of our approach. The {\scriptsize \textsf{MESSG}} differs from {\scriptsize \textsf{MESSGM}} only by not discarding poorly matching reads. This finding is generally supported also by the Fowlkes-Mallows index (last four columns) shown for two levels of trees learned by two methods. Fig. \ref{fig:fmindex} provides a more detailed insight into the Fowlkes-Mallows values graphically for all the tree levels.

Fig. \ref{fig:corseries} shows how accurate are different algorithms based on coverage. We see that our approaches produce good results for coverage around $2$. Therefore on ``hepatitis'' dataset we used down-sampling to $\alpha=2$. The assembly algorithms however require several times higher coverage to produce results of the same accuracy. Fig. \ref{fig:rlseries} shows that our method produces good estimates for shorter reads than the assembly methods.

Columns 4-5 of  Table \ref{tab:results} indicate that the exact variants ({\scriptsize \textsf{MES}}, {\scriptsize \textsf{MESS}}, {\scriptsize \textsf{MESSG}}, {\scriptsize \textsf{MESSGM}}) of our approach were systematically slower in terms of absolute runtime than the approaches based on sequence assembly, despite the NP-hard complexity of the latter task. However the approximated version ({\scriptsize \textsf{MESSGq}}, including sampling) of our approach did produce results in time, which is comparable to the assembly time. The cost for this runtime improvement was only a small decrease of accuracy. Moreover the {\scriptsize \textsf{MESSGq}} approach was faster than calculating the reference distance matrix on larger datasets. The numbers also show that our asymptotic complexity estimate in Sect. \ref{ssec:asymptotic} is generally correct: the ratio between the time spent on calculating the distances on one hand, and the runtime of the reference method on the other hand, is approximately $\alpha^{2}$.

The ``chromosomes'' dataset shows how our method is dependent on the assumption that estimates of the original sequences lengths are good. While method {\scriptsize \textsf{MES}} provides good estimates, the accuracy drops after applying the scaling from Sect. \ref{ssec:scale}. The coverage is not constant on all read bags and therefore the results are poor knowing that all the original sequences had the same length.

\section{Conclusion}

We have proposed and evaluated several variants of a method for estimating edit distances between sequences only from read-sets sampled from them. In experiments, our approach produced better estimates than a conventional approach based on first estimating the sequences themselves by applying assembly algorithms on the read-sets. 

From the experiments we see that the assembly algorithms require a higher coverage $\alpha$ and read length $l$ to produce estimates comparable to our approach. This fact provides a possibility to reduce the amount of wet-lab work, which is needed to apply machine learning methods based on similarity on read sets.

Furthermore, our approach offers many directions for technical improvements. For example, one may consider a {\em partial assembly} approach, in which sets of a few (up to a constant) reads would be pre-assembled and the Monge-Elkan distance would be applied on such partial assemblies. This view would open a `continuous' spectrum between our approach on one hand, and the conventional assembly-based approach, on which the optimal trade-off could be identified. 

\begin{acknowledgements}
Access to computing and storage facilities owned by parties and projects contributing to the National Grid Infrastructure MetaCentrum, provided under the programme "Projects of Large Research, Development, and Innovations Infrastructures" (CESNET LM2015042), is greatly appreciated.

\end{acknowledgements}

\bibliographystyle{plain}
\bibliography{files/references}

\end{document}